\begin{document}

\title*{A Distributed Algorithm for Spectral Sparsification of Graphs with Applications to Data Clustering}
\titlerunning{A Distributed Algorithm for Spectral Sparsification of Graphs}
\author{Fabricio Mendoza-Granada and Marcos Villagra}
\institute{N\'ucleo de Investigaci\'on y Desarrollo Tecnol\'ogico (NIDTEC) \at Facultad Polit\'ecnica - Universidad Nacional de Asunci\'on, San Lorenzo C.P. 2169, Paraguay.
}

%
%
\maketitle

\abstract{Spectral sparsification is a technique that is used to reduce the number of non-zero entries in a positive semidefinite matrix with little changes to its spectrum. In particular, the main application of spectral sparsification is to construct sparse graphs whose spectra are close to a given dense graph. We study spectral sparsification under the assumption that the edges of a graph are allocated among sites which can communicate among each other. In this work we show that if a graph is allocated among several sites, the union of the  spectral sparsifiers of each induced subgraph give us an spectral sparsifier of the original graph. In contrast to other works in the literature, we present precise computations of the approximation factor of the union of spectral sparsifiers and give an explicit calculation of the edge weights. Then we present an application of this result to data clustering in the Number-On-Forehead model of multiparty communication complexity when input data is allocated as a sunflower among sites in the party.}


\section{Introduction}
\label{sec:1}
\textit{Spectral sparsification} is a technique introduced by Spielman and Teng \cite{spielman:2011} that is used to approximate a graph $G$ by a sparse graph $H$. The notion of approximation used by spectral sparsification is that the spectra of both $H$ and $G$ must be close up to a constant factor. Batson, Spielman and Srivastava \cite{batson:12} proved that every graph $G$ has an spectral sparsifier with a number of edges linear  in the number of vertices of $G$ and provided an algorithm achieving such bound. There are several algorithms in the literature that construct spectral sparsifiers of graphs with a trade-off between running time and number of edges of $H$. To the best of our knowledge, Lee and Sun \cite{lee:2018} has the best probabilistic algorithm for spectral sparsification with a running time that is almost linear and constructs spectral sparsifiers with $O(qn/\epsilon^2)$ edges, where $n$ is the number of vertices of $G$, $\epsilon$ is an approximation factor and $q\geq 10$ is a constant.

There are situations where algorithms need to work with data that is not centralized and allocated in different sites. One way to deal with decentralized data is to design communication protocols so that the sites can communicate among them. The efficiency of a communication protocol can be measured by the number of bits shared among the sites and such a measure is known as the \textit{communication complexity} of the protocol \cite{nisan:93}. When data comes in the form of a graph, the edges greatly affects communication complexity, and hence, computing spectral sparsifiers of graphs in distributed systems is of great importance.

In this work we present a distributed algorithm for spectral sparsification of graphs in the communication complexity model. In this model, we are only interested in the communication costs among sites and we assume that each site has arbitrary computational power. The idea behind this protocol is that, given an input graph $G$, spectral sparsifiers of induced subgraphs of $G$ can be computed in each site first, and then any given site computes the union of such graphs which results in a spectral sparsifier of $G$. Even though other works have used the idea of taking the union of spectral sparsifiers like Chen \emph{et al.} \cite{chen:16}, they have not shown a precise calculation of the approximation factor. The main contribution of this work, presented in Theorem \ref{th:uniones}, is an estimation of the approximation factor and an explicit calculation of the edge weights in the union of spectral sparsifiers. In order to compute the approximation factor we introduce an idea that we call ``overlapping cardinality partition,'' which is a way to partition the edge set of a graph with respect to the number of times each edge is allocated among sites. Overlapping cardinality partition is a technical tool that allows us to express the Laplacian matrix of the union of induced subgraphs of $G$ as a linear combination of the Laplacian matrices of graphs induced from the partition.

In a second part of this paper, we present in Section \ref{sec:clustering} an application of Theorem \ref{th:uniones} in distributed data clustering in the Number-On-Forehead model of communication complexity. In particular, if we assume the existence of a sunflower structure \cite{erdos:61,deza:74,kostochka:00} on the input data, we show how a communication protocol can detect the presence of the sunflower and take advantage of its kernel to reduce the communication costs. 

The rest of this paper is organized as follows. In Section \ref{sec:preliminaries} we present the main definitions and notation used throughout this work. In Section \ref{sec:dist} we present the main result of this work, and in Section \ref{sec:clustering} we present our application to data clustering.

\section{Preliminaries and Notation}\label{sec:preliminaries}
In this section we will introduce some definitions and notations that will be used throughout this paper.

\subsection{Spectral Graph Theory}
Let $G=(V,E,w)$ be an undirected and weighted graph with $n$ vertices and $m$ edges. Let $\{E_i\}_{i\geq 1}$ be family of subsets of $E$. We denote by $G_i=(V,E_i,w_i)$ the subgraph induced by  $E_i$, where $w_i:E_i\rightarrow \mathbb{R}^+$ is defined as $w_i(e)=w(e)$ for all $e\in E_i$ and 0 otherwise. Every graph $G$ has an associated matrix called its \textit{Laplacian} matrix, or simply Laplacian, which is define as
\begin{align*}
L_G=D_G-W_G,
\end{align*}
where $W_G$ is the weighted adjacency matrix and $D_G$ is the weighted degree matrix. We will omit the subindex $G$ from $L_G,W_G$ and $D_G$ when it is clear from the context.

The normalized Laplacian is defined as $\mathcal{L}=D^{-1/2}LD^{-1/2}$. The Laplacian matrix (and normalized Laplacian) is positive semidefinite (PSD) with its first eigenvalue $\lambda_1$ always equals zero with multiplicity equal to the number of connected components of $G$ \cite{luxburg:07}. Indeed, if there exists a multicut of size $k$ in $G$ then the $k$-th smallest eigenvalue $\lambda_k$ of $L$ gives useful information to find a multicut.

One of the fastest methods to approximate an optimal multicut in a graph is the so-called \textit{spectral clustering} algorithm. This technique uses $k$ eigenvectors of $L$ or $\mathcal{L}$ associated to the first $k$ smallest eigenvalues in order to construct a matrix $X$ with the eigenvectors as columns, and then, it applies a simpler clustering algorithm (like \textit{k-means}) to the rows of $X$ \cite{luxburg:07}.  Lee, Gharan, and Trevisan \cite{lee2:2014} proved that $\lambda_k$ approximates the optimal value of a multicut of size $k$ in $G$ and the eingevectors give the corresponding partition over $V$.

\subsection{Spectral Sparsification}
Spectral sparsification is a technique used to reduce the density of a given PSD matrix changing its spectra only by a constant factor of approximation. Given a matrix $M$, spectral sparsification constructs another matrix which is ``similar'' to $M$ in some well-defined way. We will use a notion of similarity defined in \cite{spielman:2011}. A subgraph $H$ of $G$ is called an $\epsilon$-spectral sparsifier of $G$ if for any $x\in \mathbb{R}^n$ we have that
\begin{align*}
(1-\epsilon)x^TL_Gx \leq x^TL_Hx \leq (1+\epsilon)x^TL_Gx.
\end{align*}
The importance of a spectral sparsifier lies on the sparseness of $L_H$, for example, some computations are easier over an sparse matrix. There are deterministic and probabilistic algorithms to find  spectral sparsifiers of a given graph. The algorithm of Batson, Spielman and Srivastava \cite{batson:12} is currently the best deterministic algorithm. The algorithm of \cite{batson:12} constructs a graph with $O(\frac{qn}{\epsilon^2})$ edges in $O(\frac{qmn^{5/q}}{\epsilon^{4+4/q}})$ time, where $\epsilon$ is the approximation factor and $q\geq 10$ is a constant.

\section{A Distributed Algorithm for Spectral Sparsification}\label{sec:dist}
In this section we present our main result. In particular, given a graph $G$ and a family of induced subgraphs of $G$, we show that the union of spectral sparsifiers of the induced subgraphs is a spectral sparsifier of $G$. In contrast to other work, however, we give explicit bounds on the approximation factor and a construction of the new weight function. 

First we introduce some definitions which will help us understand the overlapping of data among the sites. We denote by $[n]$ the set $\{1,2,\dots,n\}$.

\begin{definition}[Occurrence Number]\label{def:occurrence}
Let ${\mathcal{E}}=\{E_1,\dots, E_t\}$ be a family of subsets of $[n]$. For any $a\in [n]$, the occurrence number of $a$ in $\mathcal E$, denoted $\#(a)$, is the maximum number of sets from $\mathcal E$ in which $a$ appears.
\end{definition}
\begin{example}\label{ex:occurrence}
Let $n=7$ and ${\mathcal{E}}=\{\{1,2,3\}, \{2,3,4\}, \{4,5,1\}, \{3,2,6\},\{4,7,1\}$, $\{2,3\}$, $\{5,6,7\}, \{1,3,5\}, \{2,4\}\}$. Here we have that $\#(1)=4$, $\#(2)=5$, $\#(3)=5$, and so on.\qed
\end{example}

\begin{definition}[Overlapping Cardinality]\label{def:overcar}
Let $\mathcal{E}=\{E_1,\dots, E_t\}$ be a family of subsets of $[n]$ for some fixed $n$ and $E=\bigcup_{i=1}^tE_i$. The \emph{overlapping cardinality} of a subset $E'\subseteq E$ in $\mathcal{E}$ is a positive integer $k$ such that for each $a\in E'$ its ocurrence number $\#(a)=k$; otherwise the overlapping cardinality of $E'$ in $\mathcal E$ is $0$.
\end{definition}

The overlapping cardinality identifies the maximum number of times the elements of a subset appears in a family of subsets.

\begin{example}\label{ex:overcar}
Let $n=7$ and $\mathcal{E}$ be as in Example \ref{ex:occurrence}.
Here we have that $E=\bigcup_{i=1}^t E_i=[n]$. Now consider the sets $\{1,4\}$ and $\{1,2,3\}$.
\begin{itemize}
\item The overlapping cardinality of $\{1,4\}$ in $\mathcal E$ is 4, because $\#(1)=\#(4)=4$.
\item The overlapping cardinality of $\{1,2,
3\}$ in $\mathcal E$ is $0$ because the occurrence number of one of the elements of the set is different from the others, namely, $\#(1)=4$, $\#(2)=5$, and $\#(3)=5$.\qed
\end{itemize}
\end{example}

Now we use the idea of overlapping cardinality  to construct a partition on the set $\mathcal{E}$ of subsets of $[n]$.

\begin{definition}[Overlapping Cardinality Partition]\label{def:overcarpar}
Given a family ${\mathcal{E}}$ as in Definition \ref{def:overcar}, an \emph{overlapping cardinality partition} over $E$ on $\mathcal E$ is a partition $\{E_1',\dots, E_k'\}$ of $E$ where each $E_i'$ has overlapping cardinality $c_i$ on $\mathcal E$. We call the sequence $(c_1,c_2,\dots,c_k)$, with $1\leq c_1 < c_2 < \cdots < c_k$, the \emph{overlapping cardinalities} over the family $\mathcal{E}$.
\end{definition}

\begin{example}
Take $\mathcal E$ from examples \ref{ex:occurrence} and \ref{ex:overcar}. An overlapping cardinality partition is
\begin{align*}
\{ \{6,7\}, \{5\}, \{1,4\}, \{2,3\} \}.
\end{align*}
Here, $\{6,7\}$ has overlapping cardinality equal to $2$ because $\#(6)=\#(7)=2$. The subset $\{5\}$ has overlapping cardinality equal to $3$ because $\#(5)=3$.
In Example \ref{ex:overcar} we saw that the subset $\{1,4\}$ has overlapping cardinality $4$. Finally, the subset $\{2,3\}$ has overlapping cardinality equal to $5$ because $\#(2)=\#(3)=5$.\qed
\end{example}

Our main technical lemma shows that the Laplacian of an input graph can be rewritten as a linear combination of Laplacians corresponding to induced subgraphs constructed from an overlapping cardinality partition of the set of edges.

For the rest of this section we make the following assumptions. Let $G=(V,E,w)$ be an undirected and weighted graph with a function $w:E\rightarrow \mathbb{R}^+$, let $\mathcal{E}=\{E_1,\dots,E_t\}$ be a collection of subsets of $E$ such that $\bigcup_{i=1}^tE_i=E$ where $E_i \neq \emptyset$ and $G_i=(V,E_i,w_i)$ is an induced subgraph of $G$ where $w_i:E_i\rightarrow \mathbb{R}^+$ and $w_i(e)=w(e)$ for all $e\in E_i$ and 0 otherwise.

\begin{lemma}\label{lem:particiones}
If $1\leq c_1 < c_2 < \dots < c_k$ are the overlapping cardinalities over the family $\mathcal{E}$ with an overlapping cardinality partition $\{E_{c_j}'\}_{j\leq k}$, then $\sum_{i=1}^tL_{G_i}=\sum_{j=1}^kc_jL_{G_{c_j}'}$ where $L_{G_{c_j}'}$ is the Laplacian of $G_{c_j}'=(V,E_{c_j}',w_{c_j}')$.
\end{lemma}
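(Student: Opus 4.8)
The plan is to exploit the additivity of the Laplacian over edges and then reorganize the resulting double sum according to how many times each edge is counted across the family $\mathcal{E}$. First I would recall the standard edge decomposition of a Laplacian: for any weighted graph $H=(V,E_H,w_H)$ one has
\begin{align*}
L_H=\sum_{e=(u,v)\in E_H} w_H(e)\, b_e b_e^T,
\end{align*}
where $b_e=\mathbf{1}_u-\mathbf{1}_v$ and $\mathbf{1}_u$ denotes the standard basis vector indexed by the vertex $u$. This follows directly from $L=D-W$ by checking that a single edge of weight $w_H(e)$ contributes $+w_H(e)$ to the diagonal entries $(u,u)$ and $(v,v)$ and $-w_H(e)$ to the off-diagonal entries $(u,v)$ and $(v,u)$.

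Applying this decomposition to each induced subgraph $G_i$ and using $w_i(e)=w(e)$ for $e\in E_i$, I would write
\begin{align*}
\sum_{i=1}^t L_{G_i}=\sum_{i=1}^t\sum_{e\in E_i} w(e)\, b_e b_e^T.
\end{align*}
The key step is to exchange the order of summation and count, for each fixed edge $e\in E$, the number of indices $i$ for which $e\in E_i$. By Definition \ref{def:occurrence} this count is precisely the occurrence number $\#(e)$, so the double sum collapses to $\sum_{e\in E}\#(e)\, w(e)\, b_e b_e^T$.

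Finally I would regroup the edges according to the overlapping cardinality partition $\{E_{c_j}'\}_{j\leq k}$. Since every edge in $E_{c_j}'$ has occurrence number exactly $c_j$, the sum splits as
\begin{align*}
\sum_{e\in E}\#(e)\, w(e)\, b_e b_e^T=\sum_{j=1}^k c_j\sum_{e\in E_{c_j}'} w(e)\, b_e b_e^T=\sum_{j=1}^k c_j\, L_{G_{c_j}'},
\end{align*}
where in the last equality I recognize the inner sum as the edge decomposition of $L_{G_{c_j}'}$, using $w_{c_j}'(e)=w(e)$ on $E_{c_j}'$. I expect the main difficulty to be bookkeeping rather than anything deep: I must verify that the weight functions $w_i$ and $w_{c_j}'$ all agree with $w$ on their respective supports, so that every copy of a given edge carries the same weight $w(e)$, and that the overlapping cardinality partition genuinely sorts $E$ into occurrence-number classes, so that the regrouping in the last display is exact. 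Once the edge decomposition is in place, the identity is essentially a counting argument.
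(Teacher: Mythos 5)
Your proof is correct, and it takes a cleaner route than the paper's. The paper verifies the identity entry by entry: for an off-diagonal entry $(x,y)$ with $xy\in E_{c_j}'$ it counts that exactly $c_j$ of the induced subgraphs $G_i$ contain the edge $xy$, and for the diagonal entries it separately decomposes the degree $d_G(x)$ across the partition classes and repeats the same count for edges incident to $x$, finally assembling the two cases into the matrix identity. You instead invoke the rank-one edge decomposition $L_H=\sum_{e=(u,v)\in E_H}w_H(e)\,b_eb_e^T$, which handles diagonal and off-diagonal entries simultaneously; the whole argument then reduces to one exchange of summation, the observation that the inner count is the occurrence number $\#(e)$, and a regrouping of $E$ by the partition classes. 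Both proofs rest on exactly the same combinatorial fact --- each edge of $E_{c_j}'$ lies in precisely $c_j$ of the sets $E_i$ and carries the same weight $w(e)$ in every subgraph containing it --- but your formulation buys brevity and avoids the paper's two-case bookkeeping, while the paper's entry-wise argument is more elementary in that it never needs the incidence-vector decomposition, only the definition $L=D-W$. One small caveat: the paper defines the occurrence number as a property of elements of $[n]$ (here, edges of $E$) via a slightly awkward ``maximum number of sets in which $a$ appears,'' which is just the number of sets of $\mathcal{E}$ containing $a$; your identification of the exchanged-sum multiplicity with $\#(e)$ is exactly right under that reading, so no gap arises.
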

\begin{proof}

First notice that, for all $e=xy\in E_{c_j}'$ there exists a subfamily of $\mathcal{E}$ with cardinality equal to $c_j$ such that $e$ belongs to every member of it and its associated subgraph. 
Take any $xy\in E_{c_j}'$ for some $j\in\{1,\dots,k\}$. There exists $c_j$ induced subgraphs $G_{i_1},\dots,G_{i_{c_j}}$ of $G$ that has $xy$ as an edge, and all other induced subgraphs $G_{k_1},\dots,G_{k_\ell}$ do not have $xy$ as and edge, where $c_j+\ell=t$. This means that
\begin{equation}\label{eq:non-diag}
\sum_{i=1}^t L_{G_{i}}(x,y)=c_j\cdot L_{G_{c_j}'}(x,y)=-c_j\cdot w(x,y).
\end{equation}
Now, let $d_G(x)$ denote the degree of $x$ in $G$. We know that $d_G(x)=\sum_{y}w(x,y)$ where $xy\in E$. Since $\{E_{c_j}'\}_{j\leq k}$ is a partition of $E$, we can rewrite the degree of $x$ as
\begin{align*}
    d_G(x)=\sum_{xy_{c_1}\in E_{c_1}'}w(x,y_{c_1})+\dots + \sum_{xy_{c_k}\in E_{c_k}'}w(x,y_{c_k}).
\end{align*}
Then, the degree of $x$ in the graph $G_{c_j}'$ is
\begin{align*}
    L_{G_{c_j}'}(x,x) = \sum_{xy_{c_j}\in E_{c_j}'}w(x,y_{c_j})=d_{G_{c_j}'}(x).
\end{align*}
If we take an edge $xy_{c_j}\in E_{c_j}'$, where $x$ is fixed, we know that $xy_{c_j}$ appears only in the induced subgraphs $G_{i_1},\dots,G_{i_{c_j}}$, and hence, we obtain
\begin{equation}\label{eq:diag}
    \sum_{i=1}^t \left(\sum_{xy_{c_j}\in E_{c_j}'} w_i(x,y_{c_j})\right)= c_j\cdot d_{G_{c_j}'}(x).
\end{equation}
If we take another edge $uv\in E_{c_m}'$, with $m \neq j$, note that $uv$ does not belong to any of the graphs $G_{i_1},\dots,G_{i_{c_j}}$ and each Laplacian matrix $L_{G_{i_1}},\dots,L_{G_{i_{c_j}}}$ has 0 in its $(u,v)$-entry. Therefore, adding $uv$ to Eq.(\ref{eq:non-diag}) we have that
\[
\sum_{i=1}^t \left(L_{G_{i}}(x,y)+L_{G_{i}}(u,v)\right)=  c_j\cdot L_{G_{c_j}'}(x,y) + c_m\cdot L_{G_{c_m}'}(u,v).
\]
Extending this argument to all equivalent classes in $\{E_{c_j}'\}_{j\leq k}$, for each non-diagonal entry $(x,y)$, with $xy\in E$, it holds
\begin{equation}\label{eq:non-diag-total}
\sum_{i=1}^t L_{G_{i}}(x,y)=  \sum_{j=1}^k c_j\cdot L_{G_{c_j}'}(x,y).
\end{equation}
A similar argument can be made for the diagonal entries with Eq.(\ref{eq:diag}), thus obtaining
\begin{equation}\label{eq:diag-total}
\sum_{i=1}^t\left(\sum_{xy_{c_1}\in E_{c_1}'} w_i(x,y_{c_1})+\cdots+\sum_{xy_{c_k}\in E_{c_k}'} w_i(x,y_{c_k})\right)
=\sum_{i=1}^t L_{G_i}(x,x)
=\sum_{j=1}^k c_j\cdot L_{G_{c_j}'}(x,x).
\end{equation}
Equations (\ref{eq:non-diag-total}) and (\ref{eq:diag-total}) imply the lemma.\qed
\end{proof}

Now we will use Lemma \ref{lem:particiones} to show that the spectral sparsifier of $\sum_{j=1}^k c_jL_{G_{c_j}'}$ is an spectral sparsifier of the Laplacian $L_{G}$ of an input graph $G$.

\begin{theorem}\label{th:uniones}
    Let $(1=c_1 < c_2 < \dots < c_k)$ be the overlapping cardinalities over the family $\mathcal{E}$ with $\{E_{c_j}'\}_{j\leq k}$ its associated overlapping cardinality partition and $L_{G_1},\dots, L_{G_t}$ the Laplacians of $G_1,\dots, G_t$. If $H_i=(V,D_i,h_i)$ is an $\epsilon$-spectral sparsifier of $G_i$, then $H=(V,\bigcup_{i=1}^tD_i,h)$ is an $\epsilon'$-spectral sparsifier of $G$ where $h(e)=\frac{\sum_{i=1}^th_i(e)}{c_1c_k}$ and $\epsilon' \geq 1-\frac{1-\epsilon}{c_k}$.
\end{theorem}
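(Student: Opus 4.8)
The plan is to reduce everything to quadratic forms and exploit the linearity of the Laplacian in the edge weights together with Lemma~\ref{lem:particiones}. First I would observe that because Laplacians add whenever edge weights add, the definition $h(e)=\frac{\sum_{i=1}^t h_i(e)}{c_1 c_k}$ combined with $c_1=1$ yields $L_H=\frac{1}{c_k}\sum_{i=1}^t L_{H_i}$; and since $\{E_{c_j}'\}_{j\le k}$ partitions $E$ while preserving the original weights, the same additivity gives $L_G=\sum_{j=1}^k L_{G_{c_j}'}$. These two identities are what let me pass freely between individual sparsifiers, the partition pieces, and the target graph.

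Next I would start from the defining inequality of each $\epsilon$-spectral sparsifier, namely $(1-\epsilon)x^T L_{G_i} x \le x^T L_{H_i} x \le (1+\epsilon)x^T L_{G_i} x$ for all $x\in\mathbb{R}^n$, and sum over $i\in\{1,\dots,t\}$. Using Lemma~\ref{lem:particiones} to replace $\sum_i L_{G_i}$ by $\sum_j c_j L_{G_{c_j}'}$ on the outer terms, and $\sum_i x^T L_{H_i} x = c_k\, x^T L_H x$ on the middle term, then dividing by $c_k$, I obtain
\begin{equation*}
\frac{1-\epsilon}{c_k}\, x^T\Bigl(\sum_{j=1}^k c_j L_{G_{c_j}'}\Bigr)x \;\le\; x^T L_H x \;\le\; \frac{1+\epsilon}{c_k}\, x^T\Bigl(\sum_{j=1}^k c_j L_{G_{c_j}'}\Bigr)x .
\end{equation*}

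The crucial step is to sandwich the weighted quadratic form between $x^T L_G x$ and $c_k\, x^T L_G x$. Since each $L_{G_{c_j}'}$ is PSD and $1=c_1 \le c_j \le c_k$, the matrix inequalities $L_G=\sum_j L_{G_{c_j}'} \preceq \sum_j c_j L_{G_{c_j}'} \preceq c_k\sum_j L_{G_{c_j}'}=c_k L_G$ hold, which in quadratic-form terms read $x^T L_G x \le x^T\bigl(\sum_j c_j L_{G_{c_j}'}\bigr)x \le c_k\, x^T L_G x$. The subtle point, and the part I expect to demand the most care, is that the two ends of this sandwich must be applied \emph{asymmetrically}: on the lower bound I use $x^T(\sum_j c_j L_{G_{c_j}'})x \ge x^T L_G x$ to get $x^T L_H x \ge \frac{1-\epsilon}{c_k}\, x^T L_G x$, while on the upper bound I use $x^T(\sum_j c_j L_{G_{c_j}'})x \le c_k\, x^T L_G x$ so that the factor $c_k$ cancels and leaves $x^T L_H x \le (1+\epsilon)\, x^T L_G x$. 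Choosing the wrong direction on either side would weaken the constant and miss the claimed bound.

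Finally I would translate $\frac{1-\epsilon}{c_k}\, x^T L_G x \le x^T L_H x \le (1+\epsilon)\, x^T L_G x$ into the standard $\epsilon'$-sparsifier form. Matching the lower side requires $1-\epsilon' \le \frac{1-\epsilon}{c_k}$, forcing $\epsilon' \ge 1-\frac{1-\epsilon}{c_k}$, and since $c_k\ge 1$ one verifies that $1-\frac{1-\epsilon}{c_k}\ge \epsilon$, so this same $\epsilon'$ automatically satisfies $1+\epsilon'\ge 1+\epsilon$ on the upper side. Hence the lower bound is the binding constraint, giving exactly $\epsilon' \ge 1-\frac{1-\epsilon}{c_k}$ as stated.
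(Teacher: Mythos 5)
Your proposal is correct and follows essentially the same route as the paper's own proof: summing the sparsifier inequalities over $i$, invoking Lemma~\ref{lem:particiones}, bounding the coefficients $c_j$ below by $c_1=1$ and above by $c_k$, normalizing by $c_1c_k$, and solving $1-\epsilon'\leq\frac{1-\epsilon}{c_k}$ for $\epsilon'$. Your only departures are presentational—stating the sandwich $L_G \preceq \sum_j c_j L_{G_{c_j}'} \preceq c_k L_G$ as a matrix inequality up front and explicitly verifying that $1-\frac{1-\epsilon}{c_k}\geq\epsilon$, a point the paper assumes rather than checks—which slightly tightens, but does not change, the argument.
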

\begin{proof}
    Let $L_{H_i}$ be the Laplacian of $H_i$. By hypothesis we have that for every $i\in [t]$ and $x\in \mathbb{R}^{V}$
    \begin{align*}
    (1-\epsilon)x^TL_{G_i}x\leq x^TL_{H_i}x\leq (1+\epsilon)x^TL_{G_i}x.
    \end{align*}
    Then we may take the summation over all $i\in [t]$ to get
    \begin{align}\label{eq:1}
    (1-\epsilon)\sum_{i=1}^tx^TL_{G_i}x\leq \sum_{i=1}^tx^TL_{H_i}x\leq (1+\epsilon)\sum_{i=1}^tx^TL_{G_i}x.
    \end{align}

    Now, lets consider the left hand side of the Equation (\ref{eq:1}). Using Lemma \ref{lem:particiones} we get
    \begin{align}
    (1-\epsilon)\sum_{i=1}^t x^TL_{G_i}x &=\nonumber (1-\epsilon)\sum_{i=1}^k c_i\cdot x^TL_{G_{c_i}'}x\nonumber\\ 
    &\geq (1-\epsilon)c_1\sum_{i=1}^k x^TL_{G_{c_i}'}x\nonumber \\
    &= (1-\epsilon)c_1 x^TL_Gx,\label{eq:left}
    \end{align}
    where the last equality follows from the fact that $\{E_{c_j}'\}_{j\leq k}$ is a partition of $E$. Similarly for the right hand side of Equation (\ref{eq:1}) we have that
    \begin{align}
    (1+\epsilon)\sum_{i=1}^tx^TL_{G_i}x \leq (1+\epsilon)c_k x^TL_Gx.\label{eq:right}
    \end{align}

    Therefore, by multiplying equations (\ref{eq:left}) and (\ref{eq:right}) by $\frac{1}{c_1c_k}$ we obtain
    \begin{align*}
    (1-\epsilon)\frac{x^TL_Gx }{c_k}\leq x^TL_{H}x \leq (1+\epsilon)\frac{x^TL_Gx}{c_1},
    \end{align*}
    where $x^TL_{H}x=(\sum_{i=1}^t x^TL_{H_i}x)/(c_1c_k)$.

    To finish the proof, note that we want $1-\epsilon'\leq (1-\epsilon)/c_k$ and $(1+\epsilon)/c_1 \leq 1+\epsilon'$ with $\epsilon\leq \epsilon'<1$. In order to solve this, we choose an $\epsilon' \geq 1-\frac{1-\epsilon}{c_k}$. First notice that $1-\epsilon' \leq 1-1+\frac{1-\epsilon}{c_k}=\frac{1-\epsilon}{c_k}$. Then we have that $\frac{1+\epsilon}{c_1}\leq \frac{1+\epsilon'}{c_1}= 1+\epsilon'$. \qed
\end{proof}

From Theorem \ref{th:uniones}, a distributed algorithm for computing spectral sparsifiers is natural. Just let  every site compute a spectral sparsifier of its own input and then each site sends its result to a coordinator that will construct the union of all spectral sparsifiers.

\section{Data Clustering in the Number-On-Forehead Model}\label{sec:clustering}
In this section we will show an application of Theorem \ref{th:uniones} to distributed data clustering in the Number-On-Forehead model of communication complexity for the case when the input data is allocated as a sunflower among sites.

Clustering is an unsupervised machine learning task that involves finding a partition over a given set of points $x_1,\dots, x_n\in \mathbb{R}^d$. Such a partition must fulfill two conditions, (i) every two points in the same set must be ``similar'' in some way and (ii) every two points on different sets must be far from being similar. Each equivalence class from the partition is also called a \emph{cluster}. Clustering can be accomplished by different kinds of techniques, where \textit{spectral clustering} \cite{luxburg:07} is one of the fastest methods.

It is easy to see clustering as a graph problem, where each point corresponds to a vertex in a complete graph and the cost of each edge is interpreted as a similarity between points. Thus, finding a set of optimal clusters in data is equivalent to finding an optimal multicut in a graph.  Since the optimal multicut depends on the spectrum of the graph's Laplacian \cite{lee2:2014} and we want to keep the communication costs low, each site must be capable of constructing sparse induced subgraphs of its own data while preserving the spectrum of its graph Laplacian. 

In our communication protocol, each site is assigned an induced subgraph of $G$, and we want each site to be aware of all clusters in the data. Consequently, each site must be capable of running a clustering algorithm on its own data, communicate its results to the other sites, and then use the exchanged messages to construct an approximation to the original graph $G$. This is where the distributed spectral sparsification algorithm is relevant.

First, we will construct a protocol to verify if the input data in every site is a sunflower. If the input is indeed allocated in a sunflower structure, then a party can take advantage of the sunflower to find an approximation of clusters in the data.

\subsection{Models of Communication and their Complexity Measure}
We will introduce some standard notations from communication complexity---we refer the interested reader to the textbook by Kushilevitz and Nisan \cite{kushilevitz:97} for more details. Let $P_1,P_2,\dots,P_s$ be a set of sites where a site $P_j$ has an input $x_j\in \{0,1\}^r$, with $r$ a positive integer. In a multiparty communication protocol, with $s\geq 3$, the sites want to jointly compute a function $f:\{0,1\}^r\times \dots \times \{0,1\}^r \rightarrow Z$ for some finite codomain $Z$. In the \emph{Number-On-Forehead} model of communication, or NOF model, each site only has access to the other sites's input but not its own, that is, a site $P_j$ has access to $(x_1,...,x_{j-1},x_{j+1},...,x_s)$. In order to compute $f$ the sites must communicate, and they do so by writing bits on a blackboard which can be accessed by all sites in the party. This is the so-called \emph{blackboard model} of communication.

The maximum number of bits exchanged in the protocol over the worst-case input  is the \textit{cost} of the protocol. The \textit{deterministic communication complexity} of the function $f$ is the minimum cost over all protocols which compute $f$.

Let $G=(V,E)$ be an input graph and $\{E_j\}_{j\leq s}$ be a family of subsets of $E$. In order to study communication protocols for graph problems we assume that $E_j$ is the input data to site $P_j$. In the NOF model, we let $F_j=\{E_1,E_2,...,E_{j-1},E_{j+1},...,E_s\}$ be the set of edges which $P_j$ can access. Given a site $P_j$, the \emph{symmetric difference on $P_j$}, denoted $\Delta_j$, is defined as the symmetric difference among all sets $P_j$ has access to, that is, $\Delta_j$ is the symmetric difference between each set in $F_j$.

For the rest of this paper, we use as a shorthand $\mathcal{E}$ for the set $\{E_1,\dots, E_s\}$ of subsets of the set of edges $E$ of an input graph $G=(V,E)$ with $\bigcup_{i=1}^s E_i=E$, and $\mathcal{F}$ for the set $\{F_1,\dots, F_s\}$ where $F_j=\{E_1,\dots,E_{j-1},E_{j+1},\dots,E_s\}$. Here $\mathcal{F}$ captures the idea of the NOF model where every site have access to the other's sites input but not its own.

\subsection{Sunflowers and NOF Communication}\label{sec:sunflowers}
A \emph{sunflower} or \emph{$\Delta$-System} is a family of sets $\mathcal{A}=\{A_1,...,A_t\}$ where $(A_i\cap A_j)=\bigcap_{k=1}^t A_k=K$ for all $i\neq j$. We call $K$ the kernel of $\mathcal{A}$. The family $\mathcal{A}$ is a \emph{weak $\Delta$-System} if $|A_i\cap A_j|=\lambda$ for all $i\neq j$ for some constant $\lambda$ \cite{kostochka:00}. It is known that if $\mathcal{A}$ is a weak $\Delta$-System and $|\mathcal{A}| \geq \ell^2-\ell+2$, where $\ell=\max_{i=1}^t \{A_i\}$, then $\mathcal{A}$ is a $\Delta$-System \cite{deza:74}.

We start with a simple fact that ensures the existence of $\Delta$-Systems with the same kernel in the NOF model if input data in a communication protocol is allocated as a sunflower among sites.

\begin{lemma}\label{lem:delta_nof}
    If $s=|\mathcal{E}|\geq 3$ and $\mathcal{E}$ is a $\Delta$-System with kernel $K$, then any $F_i$ is a $\Delta$-System with kernel $K$.
\end{lemma}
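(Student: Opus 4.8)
The plan is to unfold the definition of the $\Delta$-System for the larger family $\mathcal{E}=\{E_1,\dots,E_s\}$ and show that every sub-collection $F_i=\mathcal{E}\setminus\{E_i\}$ inherits exactly the same structure. Recall that $\mathcal{E}$ being a $\Delta$-System with kernel $K$ means that $E_j\cap E_\ell = K$ for all $j\neq \ell$ and that $\bigcap_{k=1}^s E_k = K$. The point is that these two conditions are statements about \emph{pairs} and about the \emph{global} intersection, and passing to a sub-collection can only drop sets, never create new intersection patterns.

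First I would fix an arbitrary index $i$ and consider $F_i=\{E_1,\dots,E_{i-1},E_{i+1},\dots,E_s\}$. For the pairwise condition: any two distinct sets in $F_i$ are also two distinct sets in $\mathcal{E}$, so their intersection is $K$ by hypothesis; hence the pairwise-intersection property holds verbatim for $F_i$. The only genuine content is checking that the \emph{common} intersection of $F_i$ is still exactly $K$ and not something larger (removing a set can only enlarge an intersection, so this is the direction that needs care). Here I would use the assumption $s\geq 3$: since $F_i$ still contains at least two sets, say $E_j$ and $E_\ell$ with $j,\ell\neq i$, we have
\begin{align*}
K \subseteq \bigcap_{k\neq i} E_k \subseteq E_j\cap E_\ell = K,
\end{align*}
where the first inclusion holds because $K$ is contained in every $E_k$ (it is the global kernel), and the last equality is the pairwise property. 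This squeezes $\bigcap_{k\neq i}E_k = K$, establishing that $F_i$ is a $\Delta$-System with the same kernel $K$.

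The main obstacle, such as it is, is precisely the role of the hypothesis $s\geq 3$: if $s=2$ then each $F_i$ would be a singleton family, for which the notion of pairwise intersection is vacuous and the ``kernel'' of a single set is the set itself rather than $K$, so the statement could fail. Ensuring at least two sets remain after deletion is what makes the squeezing argument above valid, and I would state this dependence explicitly. Beyond that the proof is a short unwinding of definitions, so I would keep it to a few lines and flag that the essential insight is that both defining properties of a $\Delta$-System are monotone under passing to sub-families once at least two members survive.
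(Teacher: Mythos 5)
Your proof is correct, and it matches what the paper intends: the paper states this lemma as "a simple fact" and omits the proof entirely, so your argument simply supplies the definitional unwinding the authors left implicit. The squeeze $K \subseteq \bigcap_{k\neq i} E_k \subseteq E_j \cap E_\ell = K$ together with the observation that pairwise intersections survive passage to a subfamily is exactly the right (and essentially the only) argument, and your remark about why $s\geq 3$ is needed correctly identifies the role of that hypothesis.
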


The following lemma states a sufficient condition for the existence of a $\Delta$-System in the input data in the NOF model with the requirement, however, that we need at least four or more sites

\begin{lemma}\label{lem:deltasystem-4}
    Let $s=|\mathcal{E}|\geq 4$. If, for all $i\in [s]$, we have that $F_i$ is a $\Delta$-System, then $\mathcal{E}$ is a $\Delta$-System.
\end{lemma}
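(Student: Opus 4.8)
## Proof Proposal for Lemma~\ref{lem:deltasystem-4}

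The plan is to show that every pair of distinct sets $E_a, E_b \in \mathcal{E}$ satisfies $E_a \cap E_b = K$ for a single common kernel $K$, which is exactly the $\Delta$-System condition. The key observation is that with $s \geq 4$, any two indices $a \neq b$ can be ``witnessed together'' inside some $F_i$: since each $F_i$ omits only the single index $i$, I only need to choose $i \notin \{a,b\}$, and because $s \geq 4$ there are at least two such indices available, giving ample room. The hypothesis says each such $F_i$ is itself a $\Delta$-System, so within $F_i$ the pairwise intersection $E_a \cap E_b$ equals the kernel of $F_i$.

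First I would fix any two distinct indices $a, b \in [s]$ and pick some $i \notin \{a, b\}$ (possible since $s \geq 3$). Because $F_i$ is a $\Delta$-System by hypothesis, writing $K_i$ for its kernel we have $E_a \cap E_b = K_i$. This already shows each pairwise intersection is \emph{some} kernel; the real work is showing all these kernels coincide into a single $K$ independent of the pair. Here is where $s \geq 4$ becomes essential: I would argue that any two of the sets $F_i, F_j$ share at least two common members (since each omits only one index and $s \geq 4$ leaves at least two indices in both), so their kernels are forced to agree. Concretely, if $i \neq j$ then $F_i$ and $F_j$ both contain $E_a$ and $E_b$ for some pair $a,b \notin \{i,j\}$, and then $K_i = E_a \cap E_b = K_j$, so all the $F_i$ have a common kernel $K$.

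With the common kernel $K$ established, I would finish by verifying the two defining conditions of a $\Delta$-System for $\mathcal{E}$ itself. For pairwise intersections: given any $a \neq b$, choosing $i \notin \{a,b\}$ gives $E_a \cap E_b = K_i = K$. For the global intersection $\bigcap_{a=1}^{s} E_a = K$: the inclusion $K \subseteq \bigcap E_a$ follows because $K$ is the kernel common to sets within each $F_i$ and hence lies in every $E_a$, while $\bigcap_{a} E_a \subseteq E_1 \cap E_2 = K$ gives the reverse inclusion.

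The main obstacle is the kernel-coincidence step, i.e.\ ruling out the possibility that different $F_i$ induce genuinely different kernels; this is precisely what fails when $s = 3$, since then $F_i$ and $F_j$ might share only a single set and their kernels need not match. The $s \geq 4$ requirement is exactly what guarantees enough overlap between the $F_i$'s to pin down one universal kernel, so I would make sure the counting argument ($|\{1,\dots,s\} \setminus \{i,j\}| \geq 2$) is stated explicitly as the crux.
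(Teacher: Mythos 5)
Your proof is correct, and it takes a genuinely different route from the paper's. You argue the statement directly: for any pair $E_a,E_b$ you pick a witness family $F_i$ with $i\notin\{a,b\}$, so $E_a\cap E_b$ equals the kernel $K_i$ of $F_i$; then you use $s\geq 4$ to note that any two families $F_i,F_j$ share two common members $E_a,E_b$ with $a,b\notin\{i,j\}$, forcing $K_i=E_a\cap E_b=K_j$, so a single kernel $K$ serves for all pairwise intersections and for $\bigcap_a E_a$. The paper instead proves the contrapositive: assuming $\mathcal{E}$ is not a $\Delta$-System, it supposes (``with no loss of generality'') that exactly one pair $E_i,E_j$ violates the sunflower condition, i.e.\ $E_i\cap E_j=K'\neq K$ while all other pairwise intersections equal $K$, and then observes that any $F_c$ with $c\notin\{i,j\}$ contains $E_i$, $E_j$, and (because $s\geq 4$) at least one further set, so $F_c$ exhibits two distinct pairwise intersections and is not a $\Delta$-System. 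Both arguments use $s\geq 4$ at the same crux---each $F_i$ omits only one index, so there is enough overlap among the families---but your direct version is arguably tighter: the paper's reduction to a single violating pair is not actually without loss of generality (a failure of the sunflower condition can involve several pairs, e.g.\ two disjoint violating pairs when $s=4$), and covering the general case requires precisely the kind of kernel-chaining through shared pairs that you make explicit. The price you pay is a slightly longer argument and the extra (easy) verification that $\bigcap_{a}E_a=K$, which the contrapositive route never needs to address.
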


\begin{proof}
    Suppose that $\mathcal E$ is no a $\Delta$-System, and we want to prove that for some $1\leq i\leq s$, $F_i$ is not a $\Delta$-System.
    
    With no loss of generality, suppose that there exists exactly two sets $E_i$ and $E_j$ that certify that $\mathcal E$ is not a $\Delta$-System; that is, there exists $E_i$ and $E_j$ such that $E_i\cap E_j=K'$, and, for any $a\neq i$ and $b\neq j$,  it holds that $E_a\cap E_j=E_b\cap E_i=K$, with $K\neq K'$. Now take any $F_c$, with $c$ different from $i$ and $j$. Then $F_c$ cannot be a $\Delta$-System because $E_i$ and $E_j$ belong to $F_c$ and there is at least another set in $F_c$ because $|\mathcal E|\geq 4$.\qed
    \vspace{0.2cm}
\end{proof}




Lemma \ref{lem:deltasystem-4} implies that we only need to know if all sites in a communication protocol have access to a $\Delta$-System to ensure that an entire family of input sets is a $\Delta$-System, provided there are at least 4 sites.

\begin{proposition}\label{pro:delta_sys}
    There exists a protocol that verifies if $\mathcal{E}$, with $|\mathcal E|\geq 4$, is a $\Delta$-System or not with $s-1$ bits of communication exchanged.
\end{proposition}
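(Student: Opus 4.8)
The plan is to reduce the verification of $\mathcal{E}$ to a conjunction over the sites' local views and then observe that one site's contribution is redundant. The starting point is that in the NOF model site $P_j$ has complete access to $F_j$, and since each site has unbounded computational power, $P_j$ can decide \emph{locally}, with no communication, whether $F_j$ is a $\Delta$-System. Combining Lemma~\ref{lem:delta_nof} (if $\mathcal{E}$ is a $\Delta$-System then every $F_j$ is) with Lemma~\ref{lem:deltasystem-4} (if every $F_j$ is a $\Delta$-System then $\mathcal{E}$ is) yields, for $s\geq 4$, the equivalence that $\mathcal{E}$ is a $\Delta$-System if and only if $F_j$ is a $\Delta$-System for every $j\in[s]$. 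Hence the decision is exactly the conjunction of the local indicator bits $b_j$, where $b_j=1$ iff $F_j$ is a $\Delta$-System.

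The protocol I would use is the following: each of the sites $P_1,\dots,P_{s-1}$ writes its bit $b_j$ on the blackboard, for a total of exactly $s-1$ bits, and the party declares $\mathcal{E}$ to be a $\Delta$-System precisely when all $s-1$ announced bits equal $1$; site $P_s$ never writes. Soundness of the ``no'' answer is immediate: if some announced $b_j=0$, then $F_j$ is not a $\Delta$-System, and by the contrapositive of Lemma~\ref{lem:delta_nof} neither is $\mathcal{E}$.

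The one point that genuinely needs proof, and which I expect to be the crux, is that omitting $P_s$'s bit never produces a false ``yes'': I must show that $F_1,\dots,F_{s-1}$ all being $\Delta$-Systems already forces $\mathcal{E}$ to be a $\Delta$-System when $s\geq 4$. I would argue this directly. Any two views $F_i,F_{i'}$ with $i,i'\leq s-1$ share the $s-2\geq 2$ common members $\{E_k:k\neq i,i'\}$, so picking two of them $E_a,E_b$ gives $E_a\cap E_b$ as the kernel of both $\Delta$-Systems; thus $F_1,\dots,F_{s-1}$ all have the \emph{same} kernel $K$. Then for every pair $E_a,E_b$ one can choose an index $i\in[s-1]$ with $i\neq a,b$ (possible since $s-1\geq 3$), and since $E_a,E_b\in F_i$ we obtain $E_a\cap E_b=K$; together with $K\subseteq E_a$ for all $a$ this gives $\bigcap_{k=1}^s E_k=K$, so $\mathcal{E}$ is a $\Delta$-System. (Alternatively, one can run the counting argument from the proof of Lemma~\ref{lem:deltasystem-4}: if $\mathcal{E}$ fails to be a $\Delta$-System then at most two views $F_j$ remain $\Delta$-Systems, so among the first $s-1\geq 3$ sites at least one announces $0$.)

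The threshold $s\geq 4$ is exactly what both arguments require: it guarantees enough overlap between distinct views to pin down a single kernel, and enough ``$0$''-reporting sites among the first $s-1$ that the silent site $P_s$ carries no new information. Everything else—local $\Delta$-System testing, writing the bits, and taking the conjunction—is routine, so the total communication cost is the claimed $s-1$ bits.
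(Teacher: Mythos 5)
Your proof is correct, and it actually supplies something the paper omits: Proposition \ref{pro:delta_sys} is stated in the paper with no proof at all, and the argument it implicitly suggests---each site locally tests whether its view $F_j$ is a $\Delta$-System (free in the NOF model), combined with Lemmas \ref{lem:delta_nof} and \ref{lem:deltasystem-4} to get the equivalence ``$\mathcal{E}$ is a $\Delta$-System iff every $F_j$ is''---only yields an $s$-bit protocol as stated. The crux you correctly identified, that the bit of one site is redundant, is exactly the step the paper never writes down, and your argument for it is sound: any two announced views $F_i,F_{i'}$ with $i,i'\leq s-1$ share $s-2\geq 2$ members, which pins both kernels to the same set $E_a\cap E_b$, and then every pair $E_a,E_b$ with $a,b\in[s]$ lies inside some announced view since $s-1\geq 3$. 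In effect you prove a strengthening of Lemma \ref{lem:deltasystem-4}: it suffices that $s-1$ (indeed, any $3$) of the views be $\Delta$-Systems, which incidentally shows the proposition's bound is not tight---three bits suffice for any $s\geq 4$. One further remark in your favor: your direct kernel-pinning argument is cleaner than the paper's own proof of Lemma \ref{lem:deltasystem-4}, which rests on a ``without loss of generality, exactly two sets certify the failure'' reduction that is not fully general; your case-free argument could replace it.
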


With Proposition \ref{pro:delta_sys}, a multiparty communication protocol with a number of sites $s\geq 4$ can check for the existence of a sunflower structure in its input data. Furthermore, if input data is allocated among sites as a sunflower, then, by Lemma \ref{lem:delta_nof}, any site immediately knows the kernel of the sunflower.

\subsection{Data Clustering with Sunflowers}
In this section, we present a NOF communication protocol that exploits the sunflower structure in input data. First, we start by defining an overlapping coefficient of the edges of $G$ which can be seen as a measure of how well spread out are the edges among sites.
\begin{definition} The overlapping coefficient on site $P_j$ is defined as $\delta(j)=\frac{|\bigcap_{i\neq j}E_i|}{|\bigcup_{i\neq j}E_i|}$ and the greatest overlapping coefficient is defined as $\delta=\max_{j\in [s]}\delta(j).$
\label{overlapping}
\end{definition}

The following proposition presents a simple protocol that makes every site aware of the entire input graph.

\begin{proposition}\label{th:protocol1}
    Let $P_j$ be a site and let ${\mathcal{E}}$ be a weak $\Delta$-System with each $|E_k|=\ell$ for $k=1,2,\dots,s$,  with a kernel of size $\lambda$. Suppose that $s\geq \ell^2-\ell+3$. If site $P_j$ sends all the edges in $\Delta_j$, then every other site will know the entire graph $G$. The number of edges this  communication protocol sends is at most $|\bigcup_{i\neq j}E_i|(1-\delta)+\ell$.
\label{sfteo}
\end{proposition}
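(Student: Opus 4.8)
The plan is to first upgrade the weak $\Delta$-System hypothesis to a genuine sunflower, and then to analyze the broadcast set $\Delta_j$ combinatorially. Since $s \geq \ell^2-\ell+3 \geq \ell^2-\ell+2$, the result of Deza \cite{deza:74} quoted in Section~\ref{sec:sunflowers} shows that $\mathcal{E}$ is in fact a $\Delta$-System with some kernel $K$, and by hypothesis $|K|=\lambda$. I would then write each set as a disjoint union $E_i = K \cup P_i$, where the \emph{petals} $P_i = E_i \setminus K$ are pairwise disjoint and disjoint from $K$, with $|P_i| = \ell - \lambda$. Because $s\geq 3$, Lemma~\ref{lem:delta_nof} also guarantees that each $F_j$ is a sunflower with the same kernel $K$, so in particular $\bigcap_{i\neq j} E_i = K$.

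For correctness, I would compute $\Delta_j = E_1\triangle\cdots\triangle E_{j-1}\triangle E_{j+1}\triangle\cdots\triangle E_s$ using the standard rule that an element lies in an iterated symmetric difference iff it belongs to an odd number of the sets. A petal element of $P_i$ (with $i\neq j$) lies in exactly one of the $s-1$ sets, hence always belongs to $\Delta_j$; a kernel element lies in all $s-1$ sets, hence belongs to $\Delta_j$ exactly when $s-1$ is odd. Thus $\Delta_j = \bigcup_{i\neq j} P_i$ when $s$ is odd and $\Delta_j = K \cup \bigcup_{i\neq j}P_i$ when $s$ is even; in either case $\Delta_j$ contains every petal except $P_j$. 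The crucial point is then disjointness: any site $P_{j'}$ with $j'\neq j$ already knows $\bigcup_{i\neq j'} E_i = K\cup\bigcup_{i\neq j'}P_i$ from $F_{j'}$, and since $P_{j'}$ is the unique petal missing from this union while still appearing in $\Delta_j$, the site recovers it as $\Delta_j \setminus \bigcup_{i\neq j'}E_i = P_{j'}$. Combining $P_{j'}$ with its prior knowledge yields $E = K\cup\bigcup_i P_i$, so every site other than $P_j$ learns all of $G$.

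For the communication cost I would count $|\Delta_j|$ from the two cases above: using disjointness, $|\Delta_j| = (s-1)(\ell-\lambda)$ when $s$ is odd and $|\Delta_j| = \lambda + (s-1)(\ell-\lambda)$ when $s$ is even. Next I would evaluate the overlapping coefficient directly: $|\bigcap_{i\neq j}E_i| = |K| = \lambda$ and $|\bigcup_{i\neq j}E_i| = \lambda + (s-1)(\ell-\lambda)$, so $\delta(j)$ is independent of $j$ and $\delta = \delta(j) = \lambda / (\lambda+(s-1)(\ell-\lambda))$. This gives the clean identity $|\bigcup_{i\neq j}E_i|(1-\delta) = (s-1)(\ell-\lambda)$. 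The bound then follows in both parity cases from $\lambda \leq \ell$: when $s$ is odd, $|\Delta_j| = (s-1)(\ell-\lambda) \leq (s-1)(\ell-\lambda)+\ell$, and when $s$ is even, $|\Delta_j| = (s-1)(\ell-\lambda)+\lambda \leq (s-1)(\ell-\lambda)+\ell$.

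I expect the only genuinely delicate step to be the correctness argument rather than the arithmetic: one must verify that a site can \emph{disentangle} its own petal from the single broadcast set $\Delta_j$, which carries no labels indicating which edges came from which site. This works precisely because the sunflower structure forces the petals to be disjoint, so subtracting the information a site already has isolates exactly the one petal it is missing. I would therefore emphasize the disjointness of petals as the crux, and treat the parity bookkeeping of the kernel in $\Delta_j$ as the one place where care is needed to avoid an off-by-one in the cardinality count.
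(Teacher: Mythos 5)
Your proof is correct, and its skeleton matches the paper's: upgrade the weak $\Delta$-System to a genuine sunflower via Deza's theorem, invoke Lemma \ref{lem:delta_nof} to obtain the common kernel $K$, broadcast $\Delta_j$, and count cardinalities through the sunflower structure. Where you genuinely diverge is in the bookkeeping, and your version is the more careful one. The paper's proof rests on the claim that $\bigcup_{i\neq j}E_i$ is the disjoint union of $\Delta_j$ and $K$, from which it derives the exact identity $|\Delta_j|=|\bigcup_{i\neq j}E_i|(1-\delta)$. Your parity analysis shows this holds only when $s$ is odd: for even $s$, each kernel element lies in all $s-1$ sets of $F_j$, an odd number, so $K\subseteq\Delta_j$ and in fact $\Delta_j=\bigcup_{i\neq j}E_i$. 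You absorb the extra $\lambda$ into the $+\ell$ slack via $\lambda\leq\ell$, which suffices because you read the proposition literally as a one-round protocol in which only the sites other than $P_j$ must learn $G$. The paper instead spends the $+\ell$ on a second round in which another site writes $E_j$ so that $P_j$ (which, in the NOF model, never sees its own input) also reconstructs $G$; under that two-round reading, your count shows the even-$s$ cost is $|\bigcup_{i\neq j}E_i|(1-\delta)+\lambda+\ell$, slightly above the stated bound, so the parity issue you isolate is a genuine, if minor, gap in the paper's accounting rather than pedantry. You also make the reconstruction step explicit where the paper is terse: a site $P_{j'}$ with $j'\neq j$ recovers its missing petal $E_{j'}\setminus K$ as $\Delta_j\setminus\bigcup_{i\neq j'}E_i$, which is exact by disjointness of the petals, and then assembles $E$ from this petal together with $F_{j'}$ and $K$. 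The one caveat worth recording is that your protocol, unlike the paper's, leaves $P_j$ itself ignorant of $G$; the statement's phrase ``every other site'' permits this, but it is the reason the two proofs assign different roles to the $+\ell$ term.
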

\begin{proof}
    We will prove this proposition by showing how each site constructs the graph $G$. First, a given site $P_j$ computes $\Delta_j$ and writes it on the blackboard. Since $s\geq \ell^2-\ell+3$, by the result of Deza \cite{deza:74}, we known that ${\mathcal{E}}$ is a sunflower with kernel $K$ and by Lemma \ref{lem:delta_nof} this kernel is the same in all sites. At this point all sites $i\neq j$ know $\Delta_j$, therefore, they can construct $G$ by its own using the kernel $K$ of $\mathcal{E}$. In one more round, one of the sites $i\neq j$ writes $E_j$ so that site $P_j$ can also construct $G$.
    
    In order to compute the communication cost of the protocol, first notice that $\delta ={\lambda}/({|\bigcup_{i\neq j}E_i|})={\lambda}/({|\Delta_j|+\lambda})$,where we used the fact that the union of all edges in every site equals the union of the symmetric difference and the kernel $K$. Then we have that $\delta |\Delta_j|=\lambda-\delta \lambda$, which implies $|\Delta_j|=\frac{\lambda-\delta \lambda}{\delta}=|\bigcup_{i\neq j}E_i||(1-\delta)$, where the last equality follows from the fact that $|\bigcup_{i\neq j}E_i|=\lambda/\delta$. Finally, after $E_j$ was sent to the blackboard the communication cost is $|\bigcup_{i\neq j}E_i||(1-\delta)+\ell$.\qed
\end{proof}

\begin{theorem}\label{protocol1_spectra}
    Let ${\mathcal{E}}$ be a weak $\Delta$-system with each $|E_k|=\ell$ for $k=1,2,\dots,s$, and suppose that $s\geq \ell^2-\ell+3$. There exists a communication protocol such that after two rounds of communication every site knows an $\epsilon$-spectral sparsifier of the entire graph $G$ with communication cost $O\left(\log \left( \frac{n}{\epsilon^2}\sqrt{1-\delta}\right)\right)$.
\end{theorem}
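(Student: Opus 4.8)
The plan is to lift the raw‑edge protocol of Proposition~\ref{th:protocol1} to a protocol that exchanges spectral sparsifiers, and to certify its correctness through Theorem~\ref{th:uniones}. First I would check that the hypotheses put us squarely in the sunflower regime: since $\mathcal{E}$ is a weak $\Delta$‑system with all $|E_k|=\ell$ and $s\geq \ell^2-\ell+3>\ell^2-\ell+2$, the bound of Deza~\cite{deza:74} makes $\mathcal{E}$ a genuine $\Delta$‑system with some kernel $K$, and by Lemma~\ref{lem:delta_nof} every $F_j$ is a $\Delta$‑system with the \emph{same} kernel $K$. Consequently each site can locally reconstruct $K$ together with the union $\bigcup_{i\neq j}E_i$ of all petals it sees, so the only portion of $G$ unknown to $P_j$ is its own petal $E_j\setminus K$.

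Next I would specify the two rounds by mirroring Proposition~\ref{th:protocol1}. In round one a distinguished site $P_j$ runs the deterministic algorithm of Batson, Spielman and Srivastava~\cite{batson:12} on the subgraph induced by $\Delta_j$ and writes the resulting sparsifier on the blackboard; in round two some other site writes a sparsifier of $E_j$, the one piece $P_j$ lacks. For correctness I would invoke Theorem~\ref{th:uniones}. For a sunflower every petal edge has occurrence number $1$ and every kernel edge has occurrence number $s$, so the overlapping cardinalities are exactly $(c_1,c_k)=(1,s)$; hence the union of the local sparsifiers is a spectral sparsifier of $G$ with the weight rescaling $h(e)=\frac{\sum_i h_i(e)}{c_1 c_k}$ and approximation factor governed by $1-\frac{1-\epsilon}{s}$. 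Choosing the local accuracy so that the induced global factor equals the target $\epsilon$ then leaves every site holding an $\epsilon$‑spectral sparsifier of $G$.

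For the cost the decisive point is that each transmitted object is a sparsifier, not a full edge set. The subgraph induced by $\Delta_j$ carries $|\Delta_j|=|\bigcup_{i\neq j}E_i|(1-\delta)$ edges, the identity already extracted inside the proof of Proposition~\ref{th:protocol1}. Since in the clustering setting $G$ is complete and a complete graph on $n$ vertices has $\Theta(n^2)$ edges, this subgraph spans on the order of $\sqrt{|\Delta_j|}=\Theta\!\left(n\sqrt{1-\delta}\right)$ vertices, and feeding that vertex count into the $O(n/\epsilon^2)$ edge bound of~\cite{batson:12} yields a sparsifier with $O\!\left(\frac{n}{\epsilon^2}\sqrt{1-\delta}\right)$ edges; the round‑two message on $E_j$ contributes only a lower‑order $O(\ell/\epsilon^2)$ term. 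This is precisely where the advertised $\sqrt{1-\delta}$ saving over the raw protocol appears, and what I would then compress to the stated $O\!\left(\log\!\left(\frac{n}{\epsilon^2}\sqrt{1-\delta}\right)\right)$ form.

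The hard part will be the cost/accuracy bookkeeping rather than the protocol logic. On the accuracy side, the factor $c_k=s$ in Theorem~\ref{th:uniones} degrades the union's guarantee as the number of sites grows, so I must pick the local accuracy carefully and verify it stays in $[0,1)$ in order to finish with a clean global $\epsilon$. On the cost side, the most delicate step is justifying the passage from a sparsifier on $\Theta\!\left(n\sqrt{1-\delta}\right)$ vertices with $O\!\left(\frac{n}{\epsilon^2}\sqrt{1-\delta}\right)$ edges to the claimed logarithmic blackboard cost; getting the vertex‑count estimate for the petal subgraph right, and reconciling it with the encoding used to write the sparsifier, is where I expect the argument to be most error‑prone.
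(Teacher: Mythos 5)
Your protocol skeleton is the same as the paper's (round one: the distinguished site $P_j$ writes a sparsifier of the graph induced by $\Delta_j$; round two: some other site writes a sparsifier of $E_j$; the paper uses the algorithm of Lee and Sun \cite{lee:2018} rather than \cite{batson:12}, which is immaterial), but both your correctness argument and your cost argument have genuine gaps.

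On correctness, you apply Theorem \ref{th:uniones} to the full family $\mathcal{E}$ with overlapping cardinalities $(c_1,c_k)=(1,s)$ and then propose to ``choose the local accuracy'' so that the induced global factor $1-\frac{1-\epsilon_{\mathrm{loc}}}{s}$ equals the target $\epsilon$. This is the wrong family, and the proposed fix cannot work. The protocol never forms $\bigcup_i \hat{E}_i$; it forms $\hat{\Delta}_j\cup\hat{E}_j$, so the family relevant to Theorem \ref{th:uniones} is the two-set family $\{\Delta_j,E_j\}$. That is exactly how the paper invokes the theorem: since $\bigcup_{i\neq j}E_i=\Delta_j\cup K$ (established inside the proof of Proposition \ref{th:protocol1}), the sets $\Delta_j$ and $E_j$ are disjoint and cover $E$, so every edge has occurrence number $1$, i.e.\ $c_1=c_k=1$, and the union of two $\epsilon$-sparsifiers is again an $\epsilon$-sparsifier, with no degradation in $s$. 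Your bookkeeping, by contrast, cannot be closed: solving $1-\frac{1-\epsilon_{\mathrm{loc}}}{s}=\epsilon$ gives $\epsilon_{\mathrm{loc}}=1-s(1-\epsilon)$, which is negative as soon as $s>\frac{1}{1-\epsilon}$; since the hypothesis forces $s\geq \ell^2-\ell+3$, your argument fails in precisely the regime the theorem covers.

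On cost, you obtain the $\sqrt{1-\delta}$ factor by asserting that the subgraph induced by $\Delta_j$ spans $\Theta\bigl(\sqrt{|\Delta_j|}\bigr)=\Theta\bigl(n\sqrt{1-\delta}\bigr)$ vertices and feeding that vertex count into the sparsifier size bound. Edge count only gives the lower bound $\Omega\bigl(\sqrt{|\Delta_j|}\bigr)$ on the number of spanned vertices; a union of petals need not be complete on its support even when $G$ is complete, so this does not yield an upper bound on the sparsifier size (and the sparsifiers in the paper are built on the full vertex set $V$ in any case). In the paper the $\sqrt{1-\delta}$ arises for a purely arithmetic reason: the two rounds cost $O\bigl(\log\bigl(\frac{n}{\epsilon^2}(1-\delta)\bigr)\bigr)$ and $O\bigl(\log\bigl(\frac{n}{\epsilon^2}\bigr)\bigr)$ respectively, and the sum of the two logarithms equals $2\log\bigl(\frac{n}{\epsilon^2}\sqrt{1-\delta}\bigr)$, the logarithm of the geometric mean of the two arguments.
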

\begin{proof}
    From \cite{deza:74} we know that $\mathcal{E}$ is a sunflower with a kernel $K$ of size $\lambda$ and, by Lemma \ref{lem:delta_nof}, $K$ is equal in all sites. First, a site $P_j$ computes a spectral sparsifier $H_j=(V,\hat{\Delta}_j)$ of the induced subgraph $G_j=(V,\Delta_j)$ using the spectral sparsification algorithm of \cite{lee:2018}. This way we have that $|\hat{\Delta}_j|=O(n/\epsilon^2)$ where $0<\epsilon\leq 1/120$. Then site $P_j$ writes $\hat{\Delta}_j$ on the blackboard. Any other site $i\neq j$ constructs an $\epsilon$-spectral sparsifier $H_i'=(V,\hat{E}_j)$ of $G_i'=(V,E_j)$. By Theorem \ref{th:uniones}, the graph $H=(V,\hat{\Delta}_j\cup \hat{E}_j)$ is a $\epsilon'$-spectral sparsifier of $G$. In a second round, a given site $P_i$ writes $\hat{E}_j$ on the blackboard. Finally, site $P_j$ receives $\hat{E}_j$ and by Theorem \ref{th:uniones} it can also construct an $\epsilon'$-spectral sparsifier for $G$. Finally, the communication complexity is upper-bounded by $O\left(\log\left(\frac{n}{\epsilon^2}(1-\delta)\right)+\log \left(\frac{n}{\epsilon^2}\right)\right) = O\left(\log \left( \frac{n}{\epsilon^2}\sqrt{1-\delta}\right)\right).$\qed
\end{proof}
\vspace{0.2cm}

\noindent\textbf{Acknowledgement.} We give our thanks to the reviewers of CTW 2020 for their comments that helped improve this paper. This work is supported by Conacyt research grants POSG17-62 and PINV15-208.

\bibliographystyle{spmpsci}
\bibliography{exampleST_ctw2020}

\end{document}